\definecolor{refkey}{gray}{.75}
\definecolor{labelkey}{gray}{.5}
\newtheorem{Theorem}{Theorem}[section]
\newtheorem{Lemma}[Theorem]{Lemma}
\newtheorem{Proposition}[Theorem]{Proposition}
\newtheorem{Corollary}[Theorem]{Corollary}
\newtheorem{Remark}[Theorem]{Remark}
\newtheorem{Definition}[Theorem]{Definition}
 \definecolor{darkgreen}{rgb}{0,0.4,0}
\definecolor{light}{gray}{0.9}
\newcommand{\cC}{\ensuremath{\mathcal C}}
\newcommand{\cE}{\ensuremath{\mathcal E}}
\newcommand{\cG}{\ensuremath{\mathcal G}}
\newcommand{\cH}{\ensuremath{\mathcal H}}
\newcommand{\cN}{\ensuremath{\mathcal N}}
\newcommand{\cV}{\ensuremath{\mathcal V}}
\newcommand{\bbE}{{\ensuremath{\mathbb E}} }
\newcommand{\bbI}{{\ensuremath{\mathbb I}} }
\newcommand{\bbN}{{\ensuremath{\mathbb N}} }
\newcommand{\bbP}{{\ensuremath{\mathbb P}} }
\newcommand{\bbR}{{\ensuremath{\mathbb R}} }
\newcommand{\bbZ}{{\ensuremath{\mathbb Z}} }
\newcommand{\lrgh}{\longleftrightarrow}
\newcommand{\rgh}{\rightarrow}
\let\a=\alpha \let\b=\beta   \let\d=\delta  \let\e=\varepsilon
 \let\g=\gamma     \let\k=\kappa  \let\l=\lambda
      \let\o=\omega    \let\p=\pi  
\let\r=\rho  \let\s=\sigma \let\t=\tau   
 \let\x=\xi \let\z=\zeta
     \let\L=\Lambda 
\let\O=\Omega      
\newcommand{\rosso}{\textcolor{black}}
\author[A.~Faggionato]{Alessandra Faggionato}
\address{Alessandra Faggionato.
  Dipartimento di Matematica, Universit\`a di Roma `La Sapienza'
  P.le Aldo Moro 2, 00185 Roma, Italy}
\email{faggiona@mat.uniroma1.it}
\title[Mott's  law]{
Mott's law for the critical conductance of Miller--Abrahams random resistor network}
\begin{document}

\begin{abstract}
In this short note we derive Mott's law for the critical conductance  of the  Miller--Abrahams random resistor network on a Poisson point process on $\bbR^d$, $d\geq 2$, and we give a percolative characterization of the factor preceding the temperature dependent term $\b^\frac{\a+1}{\a+1+d}  $. We also give mathematical arguments supporting its universality.  This note is a  preliminary version of a more extended work, where we also discuss the equality between the effective conductance of the resistor network and 
the critical conductance.

\medskip

%\noindent {\em Keywords}:
%Poisson point process, Palm distribution,   Mott variable range hopping, Mott law, 
%\medskip
%\noindent{\em AMS 2010 Subject Classification}: \club 
%60F10,  %Large deviations
%60J27, %Continuous-time Markov processes on discrete state spaces
%82C05.  %Classical dynamic and nonequilibrium statistical mechanics (general)

\thanks{This work   has been  supported  by  PRIN
  20155PAWZB ``Large Scale Random Structures". }

\end{abstract}

\maketitle

\section{Introduction}
Mott's  variable range hopping is a mechanism of phonon--assisted electron transport taking place in amorphous solids (as doped semiconductors)  in the regime of strong Anderson localisation. It has been introduced by Mott in order to explain the anomalous 
non--Arrenhius decay of the conductivity at low temperature \cite{Mott}. We refer to \cite{MD,SE,POF} for a detailed discussion.
Keeping the language of doped semiconductors, in the regime of low impurity density some effective models  \rosso{have} been proposed. One can approximate  the localized electrons  by classical non--interacting particles moving according to random walks with  jump  probability  rates 
given by the electron transition rates multiplied  by a suitable factor mimicking the effect of  Pauli exclusion principle. 
The mathematical analysis of this random walk has  lead, between other, to the derivation of upper and lower bounds of the diffusion constant in agreement with Mott's law \cite{FM,FSS}.
Another effective model in the regime of low impurity density is given by the random  Miller--Abrahams resistor network \cite{MA} (shortly, MA resistor network), on which we concentrate here.

We describe the resistor network. The set of vertexes (nodes)  of the resistor nertwork   is a so called simple point process, i.e. a   random locally finite subset   $\xi\subset \bbR^d$.  We suppose the law of $\xi$ to be isotropic. 
Given a realization of $\xi$,  independently from the random mechanism generating $\xi$, one attaches to each vertex $x\in \xi$ a random variable $E_x$ called  \emph{energy mark}, in such a way that the energy marks $(E_x)_{x\in \xi}$  are i.i.d. random variables  with common law $\nu$. Physically, $E_x$ would be the ground state energy of the electron eigenfunction localized around the impurity $x$.  The physically relevant laws $\nu$  (in inorganic doped semiconductors)   are  of the form $c|E|^\a dE$ with support in some bounded  interval $[-C_0,C_0]$, $c$ being the normalizing constant and $\a$ be a nonnegative number:
\rosso{
\[
\nu (dE)=\frac{(\a+1)}{ 2 C_0 ^{\a+1} }  |E|^\a \mathds{1}( -C_0 \leq E \leq C_0)  dE
\]}

Then the MA resistor network is obtained by 
attaching  to any unordered pair of sites $x\not = y$ in $\xi$ a filament of conductivity  \cite{AHL,POF}
\begin{equation}\label{condu}
c_{x,y}:=\exp\bigl\{ - \frac{2}{\gamma} |x-y| -\frac{\b}{2} ( |E_x|+ |E_y|+ |E_x-E_y|) \}\,.
\end{equation}
Above $\g$ denotes the localization length  and $\b$ denotes the inverse temperature.
  Note that the skeleton of the resistor network  is the complete graph on $\xi$.  

As explained in \cite{AHL} one expects that the effective conductance of the MA resistor network is well approximated  by the critical conductance $c_c(\b)$ as $\b \to \infty$. To define it, given a number $c_*>0$ we denote by $\cG(c_*)$ the 
 graph  obtained from the MA resistor network by keeping filaments with conductivity at least $c_*$. Then the critical conductance $c_c (\b)$ is characterized by the following two properties:
(i) for any  value $c_*>c_c (\b)$  a.s. the graph  $\cG(c_*)$ 
 has no  unbounded cluster, (ii) for any value $c_*<c_c(\b)$  a.s. the graph  $\cG(c_*)$ 
 has some unbounded cluster.

 We will discuss the validity of the approximation of the effective conductance of the MA resistor network with the critical conductance $c(\b)$ in a future extension of this note. 
Our aim here is the derivation of Mott's law for the critical conductance $c_c(\b)$, which reads
\begin{equation}\label{mottino}
c_c(\b) \approx e^{- \k  \, \b^\frac{\a+1}{\a+1+d}}\,, \qquad \b \gg 1\,,
\end{equation}
for some $\b$--independent constant $\k >0$.
%We will also focus on the pre-exponential factor  $\k$ by providing a percolative characerization  and giving some  justification of its universality.
To this aim, we   sample the node set $\xi$ by a homogeneous Poisson point process (the resulting resistor network will be called Poisson MA resistor network). In this case we can provide a formula for $c_c (\b)$ (cf. Corollary \ref{uff}).    We will also give  a percolation  characterization  of the   factor $\k$ in \eqref{mottino} for Poisson MA resistor networks  and   provide arguments supporting the universality of the constant $\k$ outside the class of homogeneous Poisson point processes.

We point out that the analysis of the  factor $\k$ has lead to a certain debate in the physical literature with different proposals  \cite{POF}. The analysis carried on below  is indeed analytic and mathematically rigorous (hence, without any Ansatz).

\section{Poisson Miller--Abrahams resistor network 
}
In this section we suppose that the random set $\xi$ of the nodes of the MA resistor network is  given by 
 a homogeneous Poisson point process  with density $\r$ in $\bbR^d$. 
We denote by  
$\xi(A)$ the number of points in $\xi \cap A$ for $A \subset \bbR^d$. We recall that $\xi$ is a homogeneous Poisson point process  with density $\r $ in $\bbR^d$ if and only if the following two properties are satisfied:
\begin{itemize}
\item[(i)]  For mutually disjoint Borel subsets $A_1,A_2, \dots, A_k$ in $ \bbR^d$, the random variables $\xi(A_1), \xi(A_2),\dots, \xi(A_k)$ are independent;
\item[(ii)]  For any bounded  Borel subset $A\subset \bbR^d$, $\xi(A)$ is a Poisson random variable with parameter $\r\, \ell(A)$, $\ell(A)$ being the Lebesgue measure of $A$. This means that 
\[ \bbP\bigl(\xi (A)= k\bigr)= e^{-\l \ell(A)} \frac{ (\l \ell(A) ) ^k}{k!}\,, \qquad k=0,1,2,\dots\]
\end{itemize}
To simplify the notation and without any loss of generality, we take the localization length $\g$ equal to  $2$ and  we rename  $\beta/2$ by $\beta$ (keeping the name of inverse temperature), so that  the conductivity $c_{x,y}$ in \eqref{condu} now reads
\begin{equation}\label{conda}
c_{x,y}:=\exp\bigl\{ -  |x-y| -\b  ( |E_x|+ |E_y|+ |E_x-E_y|) \}\,.
\end{equation}
%\rosso{Remove: Again without loss of generality, we take  $C_0=1$, i.e. we assume that $\nu$ is given by $c |E|^\a dE $ on the interval $[-1,1]$, $c=(\a+1)/2$. Note that $\nu ([-E,E])= E ^{\a+1}$ for any $E\in [0,1]$.}

\smallskip

To study  the graph $\cG(c_*)$ described in the introduction, it is more convenient to write $c_*$ as $e^{-\z}$ for some $\z>0$. To stress the dependence of $\cG(c_*)$ on $\z,\b, \r$ we write $\cG_{\z,\b,\r}$:
\begin{Definition} 
Given  a threshold $\z >0$, the inverse  temperature $\b$ and the density $\rho$ of the Poisson point process,  the random graph 
$\cG_{\z, \b, \rho}=  \bigl( \xi , \cE _{\z, \b, \r}  \bigr)$  has vertex set $\xi $ and  edge set $\cE _{\z, \b, \r}  $  given by the pairs $\{x,y\} $ such that 
    $x\not =y$ in $ \xi $ and \begin{equation}
\label{connectcont}
|x-y|+\beta(|E_x|+|E_y|+|E_x-E_y)|)\leq \z\,.\end{equation} 
\end{Definition}
\bigskip 
We say that the graph $\cG_{\z, \b,\r}$ percolates if it contains an unbounded connected component.

\begin{Definition} We define  $\nu_\star$ as the probability distribution 
\[ \nu_\star (du)=  \frac{(\a+1)}{2}|u|^\a \mathds{1}(-1 \leq u \leq 1) du \,.\
 \] 
 The graph  $\cG_{\z, \b,\r}$ obtained when replacing  $\nu$ by $\nu_\star$ will be denoted 
 as $\cG^\star_{\z, \b,\r}$
\end{Definition}
\smallskip

%A special role will be played by  the $\b$--independent random  graph $\cG_{1, 1, \l}$ (for later convenience when $\z=\b=1$ we  write  $\l$ instead of  $\r$).
\begin{Lemma}\label{rai}
There exists $\l_c >0$ such that if $\l <\l_c$ then a.s. the graph $\cG^\star _{1, 1,\l} $  does not percolate, while if $\l>\l_c$ then   a.s. the graph $\cG^\star_{1, 1,\l} $   percolates.
\end{Lemma}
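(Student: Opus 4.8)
The plan is to establish a monotonicity-plus-phase-transition result of the kind familiar from continuum percolation theory, using a coupling argument to exhibit monotonicity in $\l$ and a zero-one law together with standard comparison arguments to show that the percolation probability is nontrivial. Let me sketch the approach.
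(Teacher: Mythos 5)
Your proposal is not a proof: it announces a strategy (coupling for monotonicity, a zero--one law, ``standard comparison arguments'') and then stops before carrying out any step. The monotonicity part is indeed what the paper does --- couple two homogeneous Poisson processes so that the lower-density one is contained in the higher-density one, whence the percolation probability $h(\l)$ is weakly increasing, and then one only needs one density where percolation fails and one where it occurs. But the two comparisons are exactly where the content lies, and you have not supplied either.

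The upper comparison is easy but must be stated: since every edge of $\cG^\star_{1,1,\l}$ requires $|x-y|\le 1$, the graph is dominated by the Boolean model with deterministic radius $1/2$ on the Poisson process of density $\l$, which a.s.\ does not percolate for small $\l$. The lower comparison is the genuinely non-standard step, because the energy marks can forbid an edge even between nearby points, so a bare Boolean model does \emph{not} sit inside $\cG^\star_{1,1,\l}$. The paper's resolution is to fix $\d\in(0,1/3)$, keep only the vertices with $E_x\in[0,\d]$ (an independent thinning, hence a Poisson process of density $\l\,\nu_\star([0,\d])>0$), and connect those within distance $1-3\d$; for such pairs $|x-y|+|E_x|+|E_y|+|E_x-E_y|\le (1-3\d)+3\d=1$, so this Boolean model is a subgraph of $\cG^\star_{1,1,\l}$ and percolates once $\l$ is large. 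Without an argument of this kind your ``standard comparison'' does not go through. Finally, your appeal to a zero--one law is the right instinct for upgrading $h(\l)>0$ to almost-sure percolation above $\l_c$ (via translation invariance/ergodicity of the percolation event), but as written it is only a name, not an argument.
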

\begin{proof}  Since two homogeneous Poisson point processes   with density $\l,\l'$  can be coupled in a way that the one with smaller
density is contained in the other, we get that the function $h(\l)$, defined as the probability that the graph $\cG^\star_{1, 1, \l}$  percolates, is weakly increasing. Hence, to get the thesis it is enough to exhibit two positive  constants $\l_1,\l_2$ such that $h(\l_1)=0$
 and $h(\l_2)>0$.

 Let us consider the graph $\cG^{(1)}_{\l}:=(\xi ,\cE^{(1)})$, where $\cE^{(1)}$ is the set of pairs $\{x,y\}$ of vertexes  of $\xi$ that satisfies $0<|x-y|\leq 1$. Then  $\cG^{(1)} _{\l}$ contains the graph $\cG^\star_{1, 1,\l} $ and can be seen as the realization of a Boolean model with deterministic radius $1/2$ on the Poisson point process $\xi$ with density $\l$ \cite{M}. Indeed two points $x,y$ are connected by an edge in $\cG^{(1)}_\l $ if and only if the closed balls centered at $x$ and $y$ with radius $1/2$ intersect.
It is known \cite{M}  that for $\l$ small   the graph $\cG^{(1)}_\l $ a.s.  does not percolate. Since $\cG_{1,1,\l}^\star\subset \cG^{(1)}_\l$, we conclude that $h(\l)=0$ for $\l$ small.

We now show that $h(\l)$ is positive for $\l$ large. To this aim  we fix $\d \in (0,1/3)$ and we consider the graph $\cG^{(2)}$ in $\bbR^d$
obtained by taking as vertex set $\cV^{(2)}:=\{ x\in \xi\,:\, E_x\in [0, \d]\}$ and taking as edge set
$\cE^{(2)}:= \bigl\{ \{x,y\}\,:\, x,y\in \cV^{(2)} \text{ and }  |x-y |\leq 1-3\d \bigr\}$. 
By construction, since $\z=\b=1$,    if $\{x,y\}\in \cE^{(2)}$ then the inequality \eqref{connectcont} is satisfied and therefore
$\{x,y\}\in \cE$. As a consequence 
 $\cG^{(2)}_\l$ is a subgraph of $\cG_{1,1,\l}^\star$. On the other hand, $\cG^{(2)}_\l$ can be thought of as a Boolean model associated to the deterministic radius $(1-3\d)/2$, on the Poisson point process $\cV^{(2)}$ with density \rosso{$\l \nu_\star([0,\d])$} (note that \rosso{$\nu_\star([0,\d])>0$}). It is known \cite{M} that  when this density is large, i.e. for  large $\l$,    the graph $\cG^{(2)}_\l $ a.s.   percolates and therefore the same happens for $\cG_{1,1,\l}^\star$ since $\cG^{(2)}_\l\subset \cG_{1,1,\l}^\star$. This proves that $h(\l)>0$ for $\l$ large.
 \end{proof}
%Using the above lemma and a rescaling argument we can prove:
%%%%%%%%%%%%%%%%%%%%%%%%%%%%%%%%%%%%%%%%%
\begin{Theorem} \label{teo1}  Let $\l_c$ be the universal constant appearing in Lemma \ref{rai}. 
Fixed $\b$ and $\rho$,
set
\begin{equation}
\label{zetacritico}
\z_c(\beta, \rho):= \bigl( {\l_c/\r}\bigr)^{\frac{1}{\a+1+d}}
  ( \rosso{ \b C_0}) ^{\frac{\a+1}{\a+1+d}}\,.
\end{equation}
\rosso{Then,  for $ \beta >    \l_c^\frac{1}{d} \rho ^{- \frac{1}{d}} C_0^{-1}$,
 the following holds: 
\begin{itemize}
\item[(i)] if $\z < \z_c(\b,\r)$, then 
a.s.   the graph $\cG_{\z, \b,\r}  $  does not percolate;
\item[(ii)]   if  $\z > \z_c(\b,\r)$, then a.s.   the graph $\cG_{\z, \b,\r}  $  percolates.
\end{itemize}
}
\end{Theorem}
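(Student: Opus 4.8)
The plan is to reduce the general graph $\cG_{\z,\b,\r}$ to the standard graph $\cG^\star_{1,1,\l}$ of Lemma \ref{rai} by a scaling argument, and then read off the threshold from $\l_c$. First I would eliminate the parameter $C_0$ by an energy relabeling: setting $u_x:=E_x/C_0$ turns the law $\nu$ into $\nu_\star$ and rewrites the connection rule \eqref{connectcont} as $|x-y|+\b C_0\,(|u_x|+|u_y|+|u_x-u_y|)\le \z$. Hence $\cG_{\z,\b,\r}$ is isomorphic as a labelled graph, and in particular for percolation purposes, to $\cG^\star_{\z,\,\b C_0,\,\r}$. Writing $\b':=\b C_0$, it remains to decide percolation of $\cG^\star_{\z,\b',\r}$.

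Next I would view the marked process $(x,u_x)_{x\in\xi}$ as a Poisson process on $\bbR^d\times\bbR$ with intensity $\r\,dx\otimes\nu_\star(du)$ and apply the scaling $(x,u)\mapsto(\tilde x,\tilde u)=(x/a,\,u/b)$ with the choices $a:=\z$ and $b:=\z/\b'$. A direct computation shows that under this map the connection rule becomes exactly that of $\cG^\star_{1,1,\cdot}$, namely $|\tilde x-\tilde y|+(|\tilde u_x|+|\tilde u_y|+|\tilde u_x-\tilde u_y|)\le 1$, while the image is a Poisson process whose intensity on the strip $\{|\tilde u|\le 1\}$ equals $\l\,\tfrac{\a+1}{2}|\tilde u|^\a\,d\tilde x\,d\tilde u$ with
\[
\l:=\r\,a^d\,b^{\a+1}=\frac{\r\,\z^{\a+1+d}}{(\b C_0)^{\a+1}}\,.
\]
The only discrepancy between the rescaled graph and $\cG^\star_{1,1,\l}$ is that, when $b<1$, the image carries additional energies with $1<|\tilde u|\le 1/b$. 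Here I would use the elementary remark that a vertex can lie on an edge only if $|\tilde u|\le 1/2$: since $|\tilde u_x|+|\tilde u_y|+|\tilde u_x-\tilde u_y|\ge 2|\tilde u_x|$ by the triangle inequality, any vertex with $|\tilde u|>1/2$ is isolated. The extra high-energy points are therefore irrelevant for the existence of an unbounded cluster, so the rescaled graph percolates if and only if $\cG^\star_{1,1,\l}$ does. By Lemma \ref{rai} this holds iff $\l>\l_c$, and solving the equation $\l=\l_c$ for $\z$ yields exactly the critical value $\z_c(\b,\r)$ of \eqref{zetacritico}.

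The point requiring care, and where the hypothesis $\b>\l_c^{\frac1d}\r^{-\frac1d}C_0^{-1}$ enters, is the constraint $b=\z/\b'\le 1$ needed for the isolated-vertex argument. I would verify that this hypothesis is precisely equivalent to $\z_c(\b,\r)<\b C_0$ (raising $\z_c<\b C_0$ to the power $\a+1+d$ reduces to $\l_c/\r<(\b C_0)^d$). Granting this, part (i) is immediate: for $\z<\z_c$ one has $\z<\z_c<\b C_0$, hence $b<1$ and $\l<\l_c$, so no percolation. For part (ii), when $\z_c<\z<\b C_0$ the same reduction gives $\l>\l_c$ and percolation; for the remaining range $\z\ge\b C_0$ I would invoke monotonicity of percolation in $\z$ (increasing $\z$ only adds edges) together with percolation at any fixed $\z_0\in(\z_c,\b C_0)$, an interval that is nonempty exactly by the standing hypothesis.

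The main obstacle is thus not the scaling bookkeeping but the correct handling of the energy support after rescaling: the isolated-vertex observation is what upgrades the reduction to an exact percolation equivalence with $\cG^\star_{1,1,\l}$ rather than a mere one-sided comparison, and the threshold on $\b$ is what guarantees that the regime $b\le 1$ in which this equivalence is valid actually straddles the critical value $\z_c$.
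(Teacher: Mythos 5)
Your proposal is correct and follows essentially the same route as the paper: discard the isolated high--energy vertices, rescale space and energy so that the surviving marked Poisson process becomes $\cG^\star_{1,1,\l}$ with $\l=\r\,\z^{\a+1+d}/(\b C_0)^{\a+1}$, invoke Lemma \ref{rai}, and handle $\z\ge \b C_0$ by monotonicity exactly as the paper does. The only difference is presentational: you track the intensity measure on $\bbR^d\times\bbR$ directly, where the paper separates the thinning of $\xi$ from the computation of the conditional law of the rescaled marks.
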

%%%%%%%%%%%%%%%%%%%%%%%%%%%
\begin{proof} %We first observe that it enough to prove the thesis for $\z \leq \b$
\rosso{The condition $\beta >   \l_c^\frac{1}{d} \rho ^{- \frac{1}{d}} C_0^{-1}$ is equivalent to the condition $ \z_c(\b,\r) < C_0 \b $. Under this condition 
 it is enough to prove the thesis for $\z  \leq C_0 \b $. Indeed, if  we prove the  thesis for $\z  \leq C_0 \b $, then we can proceed as follows for 
 $\z > C_0 \b$.  Since  $ \z  > \z_c(\b,\r)$, and we  fix 
  $\z' \in (\z_c(\b,\r), C_0 \b)$.  Then we would know that a.s.   the graph $\cG_{\z', \b,\r}  $  percolates. Since $\cG_{\z', \b,\r}  \subset \cG_{\z, \b,\r}  $, we conclude that 
  a.s.   the graph $\cG_{\z, \b,\r}  $  percolates.
  }
  
  \rosso{Due to the above claim from now on we restrict to $\z  \leq C_0 \b $}.
It is convenient to rewrite  \eqref{connectcont} as 
\begin{equation}
\label{connectcont2}
\frac{|x-y|}{\z}+\frac{\beta}{\z}\left(|E_x|+|E_y|+|E_x-E_y|\right)\leq 1.
\end{equation}
Hence, if $(\beta/\z) |E_x|>1$, then $x$ is an isolated point in the graph $\cG_{\z, \b,\r} $ and therefore  it does not give any contribution to   the existence of the infinite cluster. So
 $\cG_{\z, \b,\r} $ percolates if and only if the   graph 
  $\tilde \cG_{\z,\b,\r}=\bigl(\tilde \xi, \tilde \cE)$ percolates, where  $\tilde \cG_{\z,\b,\r} $ is defined as follows. Its vertex set is given
    by 
  \[\tilde \xi:= \{ x\in \xi\,:\, |E_x| \leq \z/\beta\}\,,
  \] 
  while its edge set $\tilde \cE$ equals $\cE _{\z, \b, \rho}$.% given by the edges $\{x,y\} \in \cE _{\z, \b, \rho} $ with $x,y \in \tilde \xi$. 
 We analyse in detail the graph   $\tilde \cG_{\z,\b,\r} $.\\

 \noindent
 $\bullet$ 
  The vertex set  $\tilde \xi$ is obtained by  thinning $\xi$. In particular, a point $x$ in $\xi$ survives in $\tilde \xi$ independently from the other points  with probability 
 \rosso{ \[
 \nu \bigl([ -\z/\b , \z/\b]\bigr) = \left( \frac{\z}{\b C_0 } \right) ^{\a+1} \,.
 \]
 }
%    $P(| Y |\leq \z/\b)=(\z/\b)^{\a+1}$, where $Y$ is a generic random variable with law $\nu$. 
\rosso{ As a consequence, $\tilde \xi$ is a homogeneous Poisson point process with density  $\tilde \rho :=(\z/\b C_0)^{\a+1} \rho$.} \\

 \noindent
 $\bullet$  Let us set $A_x:= (\b/\z)E_x$.
 \rosso{We claim that, knowing that $x \in \tilde \xi$, $A_x$ has distribution $\nu_\star$ defined in Lemma \ref{rai}.
 Indeed, given $0 \leq u \leq 1$,  
 \begin{equation}
 \begin{split}
 & P\left(  (\b/\z)E_x\in [0,u]\,\Big{|}\, |(\b/\z)E_x|\leq 1\right) =\frac{ P(  (\b/\z)E_x\in [0,u])}{ P(  |(\b/\z)E_x|\leq 1) }= \frac{ \nu ([0,(\z/\b) u])}{ \nu( [- \z/\b, \z/\b] )}\\
  &= \frac{(1/2)  \left( \frac{\z u }{\b C_0 } \right) ^{\a+1}    }{ \left( \frac{\z}{\b C_0 } \right) ^{\a+1}  } = \frac{1}{2} u^{\a+1}=\nu_\star (  [0,u])\,.
  \end{split}
  \end{equation}
 A similar result holds for what concerns the event $\{ (\b/\z)E_x\in [-u,0]\}$, thus the thesis.
 }
% If $x\in \tilde \xi$ then $A_x$  has the same distribution  of the random variable  $(\b/\z)Y$ when $Y$ is sampled  according to  $P(\cdot | (\b/\z) |Y|\leq 1 ) $ and, by a simple computation, one gets that this distribution is again $\nu$.\\
% 
 
 \noindent
 $\bullet$ We observe that
 we can rewrite \eqref{connectcont2} in terms of the  variables $A_x$'s 
 as
 \begin{equation}
\label{connectcont3}
\frac{|x-y|}{\z}+|A_x|+|A_y|+|A_x-A_y|\leq 1\,.
\end{equation}
%In conclusion, thinking of $A_x$'s as the energy marks associated points in  $\tilde  \xi$ we have that $\tilde \cG$ has the same law of 

 Now we consider a third  graph $\hat \cG= (\hat \xi, \hat \cE)$ defined as follows. We first introduce the map $\phi:\bbR^d \to \bbR^d$ as $\phi(x)=x/\z$. Then we define $\hat\xi:=\phi (\tilde \xi)$ and we define 
 $ \hat \cE$ as the set of edges $\{ \phi (x) , \phi (y) \}$ with $\{x,y\}$ varying among $\tilde \cE$. To  any point $\phi (x) \in \hat \xi$ we  associate a new energy mark $B_{\phi(x) } := A_x$. Writing $\phi(x)=v$ and $\phi(y)=w$, we get that two vertexes  $v\not= w$ in $\hat \xi$ are connected by an edge in $\hat \cG$ if and only if 
  \begin{equation}
\label{connectcont4}
|v-w|+|B_v|+|B_w|+|B_v-B_w|\leq 1\,,
\end{equation}
  We know that the r.v.'s $B_v$'s are i.i.d. with law $\nu$. Moreover, since $\hat \xi$ is the $\phi$--image of the homogeneous Poisson point process $\tilde \xi$ with density \rosso{$(\z/ \b C_0 )^{\a+1} \rho$}, we get that $\hat \xi$ is a homogeneous Poisson point process with density \rosso{$\l:=(\z^{\a+1+d}/(\b C_0)^{\a+1})\rho$}.
  As a consequence of the above observations the random graph  $\hat \cG$ has the same law  of the random graph  \rosso{$\cG^\star_{1,1,\l}$}. Since $\cG_{\z, \b,\r} $ percolates whenever the graph $\hat \cG$ percolates (by construction), from Lemma \ref{rai} we deduce that  $\cG_{\z, \b,\r} $  a.s. does not percolate if $\l <\l_c$ and that $\cG_{\z, \b,\r} $  a.s. percolates if $\l >\l_c$.
  Trivially, the conditions $\l<\l_c$ 
  and $\l>\l_c$ equals the conditions $\z<\z_c(\b,\rho)$ and $\z>\z_c(\b,\rho)$ respectively, thus concluding the proof.
  \end{proof}
Recalling that we have written $c_*$ as $e^{- \z}$ we get the following result on the critical conductivity:
\begin{Corollary}\label{uff} Let $\l_c$ be the universal constant appearing in Lemma \ref{rai}.
Then the critical conductance  $c_c(\b)$ is given by 
\[ c_c(\b)= \exp\Big\{- \bigl( {\l_c/\r}\bigr)^{\frac{1}{\a+1+d}}
   (\b C_0)^{\frac{\a+1}{\a+1+d}}\Big\}\,.
   \]
\end{Corollary}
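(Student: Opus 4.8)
The plan is to read the corollary directly off Theorem \ref{teo1} by undoing the substitution $c_* = e^{-\z}$ introduced before the definition of $\cG_{\z,\b,\r}$. First I would observe that, by the definition of the graph $\cG(c_*)$ given in the introduction (keep all filaments of conductivity at least $c_*$) together with the convention $c_* = e^{-\z}$, the constraint $c_{x,y}\geq c_*$ is, after taking logarithms and recalling \eqref{conda}, exactly the inequality \eqref{connectcont}. Hence $\cG(c_*)$ and $\cG_{\z,\b,\r}$ are literally the same random graph whenever $c_* = e^{-\z}$, and ``having an unbounded cluster'' is synonymous with ``percolating'' in the sense used in Theorem \ref{teo1}.

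Next I would exploit the monotonicity of $\z \mapsto e^{-\z}$. Since this map is strictly decreasing, writing $\z' := -\log c_c(\b)$ the condition $c_* > c_c(\b)$ is equivalent to $\z < \z'$, and $c_* < c_c(\b)$ is equivalent to $\z > \z'$. Comparing the defining dichotomy of $c_c(\b)$ — no unbounded cluster for $c_*>c_c(\b)$, an unbounded cluster for $c_*<c_c(\b)$ — with the sharp transition of Theorem \ref{teo1} — no percolation precisely for $\z < \z_c(\b,\r)$, percolation precisely for $\z > \z_c(\b,\r)$ — I conclude that the two thresholds must coincide, i.e. $\z' = \z_c(\b,\r)$, so that $c_c(\b) = e^{-\z_c(\b,\r)}$. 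Substituting the explicit expression \eqref{zetacritico} for $\z_c(\b,\r)$ then yields exactly the claimed formula.

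The only delicate point is bookkeeping the direction of the inequalities, because passing through $c_* = e^{-\z}$ reverses them: the percolating regime $\z > \z_c(\b,\r)$ of Theorem \ref{teo1} has to be matched with the sub-critical conductance regime $c_* < c_c(\b)$, and the non-percolating regime with $c_* > c_c(\b)$. I expect no genuine obstacle here, since Theorem \ref{teo1} already furnishes the sharp percolation transition together with its threshold, and the corollary is essentially this change of variables. I would, however, flag that the identification is carried out in the regime $\b >  \l_c^{1/d}\r^{-1/d}C_0^{-1}$ in which Theorem \ref{teo1} applies; this is harmless for the intended statement, which concerns the low-temperature asymptotics $\b \gg 1$ of Mott's law \eqref{mottino}.
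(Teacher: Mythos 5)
Your proposal is correct and is essentially the paper's own argument: the paper derives the corollary from Theorem \ref{teo1} with the single remark ``recalling that we have written $c_*$ as $e^{-\z}$,'' and you have simply spelled out the change of variables and the reversal of inequalities explicitly. Your added caveat about the regime $\b > \l_c^{1/d}\r^{-1/d}C_0^{-1}$ is a fair observation (the paper's corollary statement silently inherits this restriction from Theorem \ref{teo1}) and does not affect the intended low-temperature asymptotics.
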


\section{Final remarks}
Let us suppose now that $\xi$ is an ergodic stationary simple point process on $\bbR^d$ with density $\rho$, i.e.
$\bbE[ \xi(B)]= \rho \ell(B)$ for any bounded Borel subset  $B \subset \bbR^d$. Then one could argue as in the proof of Theorem \ref{teo1} and conclude that the graph $\cG$ obtained from the Miller--Abrahams resistor network  by keeping filaments 
of conductivity at  least $e ^{-\z} $  percolates if and only if the graph $\hat \cG$ percolates, where $\hat \cG$ is obtained by keeping nodes $x$ with $|E_x| \leq \z/\b$ (the other nodes would be isolated in the   Miller--Abrahams resistor network) and  afterwards rescaling by  $\z$. By  \cite[Prop. 9.3.I]{DVJ} the nodes of the graph $\hat \cG$ behave asymptotically as a homogeneous Poisson point process, hence we expect that the low temperature asymptotics of the critical conductance  $c_c(\b)$ is the same of the Poisson asymptotics given in Corollary \ref{uff}. A more robust analysis of this universality will be given in a future work \cite{Fa}.

%%%%%%%%%%%%%%%%%%%%%%%%%%%%%%%%%%%%%%%%%
%%%%%%%%%%%%%%%%%%%%%%%%%%%%%%%%%%%%%%%%%
%%%%%%%%%%%%%%%%%%%%%%%%%%%%%%%%%%%%%%%%%
%%%%%%%%%%%%%%%%%%%%%%%%%%%%%%%%%%%%%%%%%
%%%%%%%%%%%%%%%%%%%%%%%%%%%%%%%%%%%%%%%%%
%%%%%%%%%%%%%%%%%%%%%%%%%%%%%%%%%%%%%%%%%
%%%%%%%%%%%%%%%%%%%%%%%%%%%%%%%%%%%%%%%%%

\end{document}